\documentclass[12pt, letterpaper, twoside]{article}
\usepackage[utf8]{inputenc}
\usepackage{amsmath}
\usepackage{array}
\usepackage{longtable}
\usepackage[square,sort,comma,numbers]{natbib}
\usepackage[figurename=Algorithm]{caption}
\usepackage{exptex}
\usepackage{expthmi}
\usepackage{setspace}
\usepackage{graphicx}
\usepackage{amsmath}
\usepackage{mathtools}
\usepackage{hyperref}
\usepackage{url}
\usepackage[title]{appendix}
\usepackage{mathtools}
\DeclarePairedDelimiter\ceil{\lceil}{\rceil}

\usepackage{simplemargins}
\setleftmargin{1in}
\setrightmargin{1in}
\settopmargin{1.9375in}
\setbottommargin{1.9375in}
\usepackage{xcolor}
\hypersetup{
    colorlinks,
    linkcolor={red!50!black},
    citecolor={blue!50!black},
    urlcolor={blue!80!black}
}
\usepackage{algorithm}
\usepackage{algorithmicx}
\usepackage{algpseudocode}


\def\cB{{\cal B}}

\def\cT{{\cal T}}

\newcommand{\HH}{{\bf H}}
\newcommand{\hh}{{\bf h}}

\newcommand{\ddd}{{\bf d}}
\newcommand{\D}{{\bf D}}

\newcommand{\A}{{\bf A}}

\newcommand{\es}{\mbox{E}(s^2)}
\newcommand{\smax}{s_{\max}}
\newcommand{\freq}{f_{\smax}}

\usepackage{amssymb}

 \usepackage{lineno}

\author{
	Luis B. Morales\\ Unidad Acad. IIMAS Estado de Yucat\'an, Universidad \\ Nacional Aut\'{o}noma de M\'exico, Yuc, M\'erida, Mexico \and	
	Dursun A. Bulutoglu\\ Air Force Institute of Technology Wright-Patterson Air\\ Force Base, Ohio USA 
}

\date{\today}

\title{A Bit-Parallel Tabu Search Algorithm for Finding E($s^2$)-Optimal and Minimax-Optimal
	Supersaturated Designs
}
\begin{document}
	\maketitle
\begin{abstract}
We prove the equivalence of two-symbol
supersaturated designs (SSDs)  with $N$ (even) rows, $m$ columns, 
$s_{\rm max} = 4t +i$, where $i\in\{0,2\}$, $t \in \mathbb{Z}^{\geq 0}$  and resolvable
incomplete block designs (RIBDs) 
whose any two blocks intersect in at most
$(N+4t+i)/4$ 
points.
Using this equivalence, we formulate the search for two-symbol 
$\es$-optimal and minimax-optimal SSDs with $s_{\max} \in \{2,4,6\}$ as a search for 
RIBDs whose blocks  intersect accordingly. 
This  allows developing
a bit-parallel tabu search (TS)  algorithm. 
The TS algorithm found $\es$-optimal 
and minimax-optimal SSDs achieving 
the sharpest known $\es$ lower bound 
with $s_{\max} \in \{2,4,6\}$ of sizes
$(N,m)=(16,25), (16,26), (16,27), (18,23),(18,24),(18,25),(18,26),(18,27),(18,
28),$ $(18,29),(20,21),(22,22),(22,23),(24,24)$, and $(24,25)$.
In each of these cases no such SSD could previously be found.
\end{abstract}
\section{Introduction}
Two symbol supersaturated designs (SSDs) are two symbol arrays in which
the number of  rows is less than or equal to the number of
columns. 
Throughout this paper, an SSD refers to a two-symbol SSD. 
An SSD
with $N$ rows  and $m$ columns is represented by an $N \times m$
matrix, and will be denoted by $\D(N,m)$ or simply by $\D$.  
Each entry of $\D$ is $\pm 1$, 
and the frequencies of $+1$ and  $-1$ are the same in each column. 
Moreover, $\D$ has no two columns   such that $\ddd_i =  \ddd_j$ or $\ddd_i = -\ddd_j$.

The $\es$  optimality 
criterion was defined  in~\cite{booth}, for comparing two-symbol SSDs.
The $\es$ criterion  compares two $\{-1,1\}$-arrays   of the same size by picking the one that minimizes 
\begin{equation*} \label{OF}
	\es = \sum_{i<j} \frac{ s_{ij}^2}{{m \choose 2}},
\end{equation*}
where $s_{ij}$ is the  $(i,j)$th entry of the matrix $\HH^{\top}\HH$ for a $\{-1,1\}$-array $\HH$.
The term $s_{ij}$ measures the degree of non-orthogonality
between the $i$th and $j$th columns. An SSD  is called {\em $\es$-optimal} if no SSD with the same number of rows and columns having a smaller $\es$ value exists.
For $\{-1,1\}$-arrays, let $s_{\rm max}={\rm max}_{i<j}|s_{ij}|$ and $f_{s_{\rm max}}$ be the frequency of $s_{\rm max}$ in $\{|s_{ij}|\}_{i<j}$.
The minimax criterion proposed in~\cite{booth}, minimizes $s_{\rm max}$ first, and  $f_{s_{\rm max}}$ second.  An SSD is called {\em minimax-optimal} if no other SSD with
the same size has a lower $s_{\rm max}$ or the same $s_{\rm max}$
with a smaller $f_{s_{\rm max}}$,
see~\cite{ryan}. 
In~\cite{ryan},    a search algorithm generalizing the exchange algorithm of~\cite{ngyuen} was provided 
to construct  $\es$-optimal and minimax-optimal SSDs.
An $\es$-optimal and minimax-optimal SSD with $16$ rows and $60$ columns was
found by~\cite{butler}. 
In~\cite{koukouvinosSA},  
a simulated annealing algorithm was used 
for finding $\es$-optimal and minimax-optimal
cyclic SSDs. In~\cite{koukouvinosSA2}, tabu search (TS) was used to construct $\es$-optimal SSDs with good properties by constructing supplementary difference sets.
In~\cite{gupta-morales}, a TS procedure for constructing
$\es$-optimal and minimax-optimal
$k$-circulant  SSDs was implemented.
Recently,  all isomorphism classes of $\es$-optimal and minimax-optimal  $k$-circulant SSDs with 
$N=6,  10,  14,  18,  22,  26$ rows,
$m=k(N-1)$ columns, and
$s_{\rm max} \in \{ 2,6\}$ were classified  in a computer search 
by~\cite{morales2014}.
They also classified all isomorphism classes of $\es$-optimal and minimax-optimal $k$-circulant SSDs  with
$N\equiv 0$ (mod 4) and   $s_{\rm max} =4$. For a comprehensive review of SSDs, see~\cite{georgiou}.

In Section~\ref{sec:es}, we provide some background material on $\es$ lower bounds and the  $\es$ and minimax optimality of SSDs. 
In Section~\ref{sec:equiv},   we  prove an equivalence between SSDs  with  $N$ (even) rows, 
$m$ columns, and $s_{\rm max} = 4t+i$, where $i\in\{0,2\}$, $t \in \mathbb{Z}^{\geq 0}$ and resolvable
incomplete block designs (RIBDs) such that any two distinct blocks intersect in at most
$(N+4t+i)/4$ points.
In Section~\ref{sec:opt}, using this equivalence, we formulate the problem
of constructing $\es$-optimal and minimax-optimal SSDs with  $s_{\rm max} \in \{ 2,6\}$ as a problem to
find RIBDs whose blocks intersect in at most $(N + 4t + i)/4$ points
for $(t,i)\in  \{(0,2),(1,0),(1,2)\}$.
We formulate the construction of such RIBDs as an optimization 
problem.
Unlike many optimization problems where a good
approximate solution is sufficient, in the construction of such resolvable designs (as in
the construction of other combinatorial designs), the main goal is 
to find an optimum solution.
For this purpose,  an algorithm based on  
TS~\cite{gloverTS} is developed  
in Section~\ref{sec:alg}.
Sets (blocks)
with elements from 
a set $V$ of cardinality $N$ are  stored as
bit strings where the number of bits is equal to $N$.
Each bit corresponds to exactly one element of $V$. 
Thus, a set is represented by a bit string 
in which the bits corresponding to the elements of that set are $1$ and all others bits are $0$.
(For example, in $V=\{0,1,2,3,4,5\}$,  bit string $BB=010011$ represents the set $B=\{1,4,5\}$.)  Our data structure for storing sets (blocks)
is the same as that in~\cite{segunda}.
This allows us to exploit  bit-parallelism as in~\cite{segunda} for computing  the intersections of the sets (blocks) by using
bitwise operations.
Thus,  all our computations are made using
bit-parallel Boolean instructions, which in praxis (on a {\tt x86-64  CPU}) implies that
$64$ bits of data are processed at once.
This improves the overall performance by a factor of $\min(N,64)=N$ (as $N$ is less than $64$ in all the cases we studied). In 
Section~\ref{sec:alg}, we also provide the computational complexity analysis of our algorithm.
The implementation details of our algorithm are discussed in 
Section~\ref{sec:impl}.  We end the paper with concluding remarks in  Section~\ref{sec:conclude}.

The bit-parallel  TS algorithm was able to construct fifteen previously unknown  $\es$-optimal and minimax-optimal SSDs 
of sizes
$(N,m)=(16,25), (16,26), (16,27),(18,23), (18,$ $24), (18,25),  (18,26), (18,27), (18,28), (18, 29), (20,21), (22,22), (22,23), (24,24)$ and $(24,25)$. All these SSDs, their $\es$ values,  $s_{\rm max}$s, and $f_{s_{\rm max}}$s,  as well as their Gram matrices are provided in Appendix~\ref{appendix}. 
The newly found $\es$-optimal and minimax-optimal SSDs could not have been found by the $NOA_p$ algorithms in~\cite{ryan} for $p=2,4,8$
despite running these algorithms for a very long time. So, the TS algorithm  in this paper outperforms the $NOA_p$ algorithms at least for the  SSD cases  searched in this paper.
The bit-parallel TS algorithm also found all  $\es$-optimal and  minimax-optimal SSDs obtained by 
the $NOA_p$ algorithms in~\cite{ryan}. 
We  made a comparison of TS with the $NOA_p$ algorithms. This is because in TS, $NOA_4$ and $NOA_8$ the objective function was chosen with respect to both the $\es$ and the  minimax criteria. 
In addition, $NOA_p$
algorithms were successful in locating all $\es$ optimal SSDs achieving the Ryan and Bulutoglu $\es$ lower bound~\cite{ryan} for $N\leq16$
except the $14$ row, $16$ column case, where the  Ryan and Bulutoglu $\es$ lower bound for this case was recently improved~\cite{morales19}.
No other previous algorithm had solved so many cases.

SSDs are used in computer
experiments, software testing, medical, industrial and engineering experiments, chromatography (separation science), 
as well as in biometric applications.
In~\cite{ngyuen}, an SSD with $N=28$ runs (rows) and $m=54$ factors (columns) for a crash test experiment on a planned new four-wheel drive range was discussed, where the objective was to find the best possible subset of safety features among the $54$ proposed.
In~\cite{ngyuencheng}, an SSD with $N=16$ runs (rows) and $m=18$ factors (columns) to fine-tune $16$ potential factors affecting the thermal performance of project homes was proposed, where  two additional columns were used as blocking factors. 
For designing a multistage axial compressor (turbine engine), the design engineer has selected $27$ potentially important
factors~\cite{montgomery}. In~\cite{montgomery},  SSDs with $m=27$ factors (columns) and $N=16$, $N=20$, and $N=12$ runs (rows) from~\cite{ngyuen}, \cite{ngyuen}, and~\cite{wu} respectively were compared by using data from a computer experiment for the design of a multistage axial compressor. Our newly found $N=16$ run (row), $m=27$ factor (column) $\es$-optimal, and minimax-optimal SSD could have also been used in this study. 
In~\cite{lcgc}, an SSD with $N=12$ runs (rows) and $m=24$ factors (columns) was used in composite sampling 
for monitoring pesticide residues in water.
Our newly found $N=18$ run (row), $m=24$ factor (column) $\es$-optimal, and minimax-optimal SSD could also have been used for the same purpose. 
In~\cite{WuBook}, an $N=14$ run (row), $m=23$ factor (column) SSD was proposed in place of a Plackett-Burman design that had been previously used in~\cite{Williams68} for developing an epoxide adhesive system for bonding a polyester cord. We propose our newly found $N=18$ run (row), $m=23$ factor (column) $\es$-optimal, and minimax-optimal SSD for the same purpose.
\section{Lower Bounds for $\es$ and $\es$ and Minimax Optimality of SSDs} \label{sec:es}
\label{optimality_SSD}
In~\cite{ngyuen}   and~\cite{tang}, it was independently shown that
\begin{equation} \label{bound}
	\es \ge \frac{N^2(m-N+1)}{(m-1)(N-1)}.
\end{equation}
Bound~(\ref{bound}) can be achieved only if $m=q(N-1)$
and $N\equiv 0$ (mod $4$), or if $m=2q(N-1)$ and $N\equiv 2$ (mod $4$)
for some positive integer $q$.
Bound~(\ref{bound}) was independently improved in~\cite{butler2001} 
and~\cite{bulutoglu2004}. These improved lower bounds  are equal when they both apply.

Let $\lfloor{x}\rfloor^+ =\mbox{ max} \{0, \lfloor{x}\rfloor\}$ and
$\lceil{x}\rceil^+ =\mbox{ max} \{0, \lceil{x}\rceil\}$, where  $\lfloor{.}\rfloor$ and
$\lceil{.}\rceil$ are the floor and ceiling functions.
The following theorem by~\cite{ryan} provides an improved version 
of the~\cite{bulutoglu2004}  lower bound.
\begin{theorem} \label{newbounds}
	(\cite{ryan}) Let $m$ be a positive integer such that  $m>N-1$. Then 
	there exists a unique nonnegative integer $q$ (which depends on $N$ and $m$) such that  
	$-2N +2 < m-q(N-1) < 2N-2$  and $(m+q) \equiv 2$ (mod 4). Define $g= (m+q)^2N-q^2N^2-mN^2$.
	{\scriptsize
		\begin{description}
			\item[{\rm (a)}] If $N\equiv 0$ (mod 4), then
			\begin{equation*} \label{n4}
				E(s^2) \ge \begin{cases} 
					\frac{g +2N^2 -4N}{m(m-1)} & {\rm if} \;   |m-q(N-1)| < N-1,\cr       
					\frac{g -2N^2 +4N + 4N|m-q(N-1)|}{m(m-1)} & {\rm if} \; N-1<  |m-q(N-1)|\le \frac{3}{2} N-2, \cr                
					\frac{g+4N^2-4N}{m(m-1)} &  {\rm if} \; |m-q(N-1)| > \frac{3}{2} N-2.
				\end{cases}
			\end{equation*} 
			\item[{\rm (b)}] If $N\equiv 2$ (mod 4), then
			\begin{equation*}
				E(s^2) \ge \frac{4m(m-1)  + 64\lceil{m(m-1)(h-4)/64}\rceil ^+}{m(m-1)},
			\end{equation*}
			where for even $q$
			\begin{equation*} \label{n2evenq}
				h = \begin{cases} 
					\frac{g +2N^2 -4N +8}{m(m-1)}  & {\rm if} \;   |m-q(N-1)| < N-1,\cr       
					\frac{g -2N^2 + 20N + (4N-8)|m-q(N-1)| - 24}{m(m-1)}
					& {\rm if} \; N-1<  |m-q(N-1)|\le \frac{3}{2} N-3, \cr                
					\frac{g+4N^2-4N}{m(m-1)}&  {\rm if} \; |m-q(N-1)| > \frac{3}{2} N-3,
				\end{cases}
			\end{equation*} 
			and for odd $q$
			\begin{equation*} \label{n2oddq}
				h = \begin{cases} 
					\frac{g +2N^2 -4N}{m(m-1)} & {\rm if} \;   |m-q(N-1)| < N-1,\cr       
					\frac{g -2N^2 +4N + 4N|m-q(N-1)|}{m(m-1)} & {\rm if} \; N-1<  |m-q(N-1)|\le \frac{3}{2} N-1, \cr                
					\frac{g +4N^2 -12N + 8|m-q(N-1)|  + 8}{m(m-1)} &  {\rm if} \; |m-q(N-1)| > \frac{3}{2} N-1.
				\end{cases}
			\end{equation*} 
	\end{description}}
\end{theorem}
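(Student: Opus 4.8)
The claim is a family of lower bounds, so throughout I would work with the integer $T:=\sum_{i<j}s_{ij}^2$, noting $E(s^2)=2T/(m(m-1))$, and aim to lower-bound $T$. The object carrying the structure is the Gram matrix $\M:=\HH^{\top}\HH$: it is $m\times m$, positive semidefinite, has all diagonal entries equal to $N$, has off-diagonal entries $s_{ij}$, and has rank at most $N-1$ because the column-balance condition $\1^{\top}\HH=\0$ forces the column space of $\HH$ into the hyperplane $\1^{\perp}$. Let $\mu_1,\dots,\mu_{N-1}\ge0$ denote the (at most $N-1$) nonzero eigenvalues of $\M$, padded with zeros. Two identities do the work: $\sum_a\mu_a=\operatorname{tr}\M=mN$ and $\sum_a\mu_a^2=\operatorname{tr}(\M^2)=\|\M\|_F^2=mN^2+2T$. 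Two arithmetic facts refine them: (i) two balanced $\pm1$ columns have even Hamming distance, so $s_{ij}\equiv N\pmod4$ for all $i\ne j$; and (ii) the row-sum vector $\rrr:=\HH\1_m$ satisfies $\1_N^{\top}\rrr=0$, $r_k\equiv m\pmod2$, and $\sum_{i<j}s_{ij}=\tfrac12(\|\rrr\|^2-mN)$, which pins down the attainable values of $\sum_{i<j}s_{ij}$.

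The easy half is immediate: power-mean on $\sum_a\mu_a=mN$ over $N-1$ summands gives $\sum_a\mu_a^2\ge m^2N^2/(N-1)$, hence $2T\ge mN^2(m-N+1)/(N-1)$ --- bound~(\ref{bound}) --- with equality only if all $\mu_a$ are equal, which as recalled after~(\ref{bound}) is possible only under the stated divisibility conditions. So for part~(a) ($N\equiv0\pmod4$) the improvement must be extracted from fact~(i): off the diagonal $\M$ is divisible by $4$, and together with the constant diagonal and the rank bound this rigidly restricts how spread out the eigenvalue multiset can be. The plan is to show that the Frobenius-minimal feasible $\M$ is, up to the correction terms, the Gram matrix of $q$ mutually near-orthogonal ``Hadamard-type'' blocks of $N-1$ balanced columns together with $r:=m-q(N-1)$ leftover columns --- equivalently $(q+1)$ such blocks with $N-1-r$ columns removed --- whose nonzero eigenvalues are $(q+1)N$ with some multiplicity and $qN$ with the rest. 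I would: check that $q$ as in the statement is the unique integer with $-2N+2<r<2N-2$ and $m+q\equiv2\pmod4$; compute $2T$ for that extremal $\M$, which produces $g=(m+q)^2N-q^2N^2-mN^2$ plus a regime-dependent constant; and then prove the matching inequality $2T\ge g+(\text{correction})$ directly from the identities and facts~(i)--(ii). The three regimes of $|m-q(N-1)|$ ($<N-1$; between $N-1$ and $\tfrac32N-2$; larger) are exactly the ranges over which the shape of the optimal eigenvalue multiset --- how many eigenvalues exceed the average $mN/(N-1)$ and by how much --- stays fixed; the thresholds should fall out of comparing ``add $r$ columns'' with ``delete $N-1-r$ columns''.

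For part~(b) ($N\equiv2\pmod4$) the same skeleton applies, but fact~(i) is sharper: now $s_{ij}\equiv2\pmod4$ for \emph{every} $i\ne j$, so $s_{ij}^2\equiv4\pmod{32}$ (odd squares are $\equiv1\pmod8$), whence $T\equiv4\binom m2\pmod{32}$ and in particular $E(s^2)\ge4$ with no hypotheses. Moreover $m+q\equiv2\pmod4$ ties the parity of $q$ to that of $m$: $q$ even forces $m$ even (so the row sums $r_k$ are even and $\|\rrr\|^2$ can be $0$), while $q$ odd forces $m$ odd (so each $r_k$ is odd, $\|\rrr\|^2\ge N$, and $\sum_{i<j}s_{ij}\ge-N(m-1)/2$); through fact~(ii) this changes which configurations are feasible and hence changes the continuous estimate, which is why $h$ has two definitions. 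The plan is to run the part~(a) optimization to obtain $h$ (its three sub-cases matching the three regimes of $|m-q(N-1)|$), and then to round $T$ up to the least admissible value --- a multiple of $32$ above $4\binom m2$, sharpened to a multiple of $64$ once the extra parity coming from $q$ is tracked through $\sum_{i<j}(s_{ij}^2-4)=4\sum_{i<j}(u_{ij}^2-1)$ with $u_{ij}$ odd --- which is precisely what the term $64\lceil m(m-1)(h-4)/64\rceil^+$ records.

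The main obstacle is the matching lower bound: proving $2T$ is at least the stated quantity, not merely at least something weaker. This requires showing that the two trace identities, the parity of the $s_{ij}$, and the feasibility constraints on $\rrr$ are \emph{jointly} the binding restrictions, exhibiting the extremal eigenvalue multiset in each of the three (resp.\ six, after the $q$-parity split) regimes, and verifying that the regime boundaries land exactly at $\tfrac32N-2$, $\tfrac32N-3$, and $\tfrac32N-1$ --- a delicate discrete optimization whose edge behaviour is exactly what the $(\cdot)^+$ truncations absorb. Note that \emph{attainment} of these bounds is not needed for the theorem (that is the business of the companion constructions in~\cite{butler2001,bulutoglu2004}); only the inequality is asserted here.
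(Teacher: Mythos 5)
There is nothing in the paper to compare your argument against: Theorem~\ref{newbounds} is imported verbatim from~\cite{ryan} and the paper offers no proof of it, so your proposal can only be judged against the argument in the cited literature. Your supporting facts are all correct and are indeed the raw material of that argument: $\mathrm{rank}(\HH^{\top}\HH)\le N-1$ from column balance, the two trace identities, the congruence $s_{ij}\equiv N\pmod 4$ (hence $s_{ij}^{2}\equiv 4\pmod{32}$ when $N\equiv 2\pmod 4$, which is exactly what the $64\lceil\cdot\rceil^{+}$ term records after doubling), the row-sum identity $\sum_{i<j}s_{ij}=\tfrac12(\|\rrr\|^{2}-mN)$, and the power-mean derivation of bound~(\ref{bound}).

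The genuine gap is that everything that makes the theorem the theorem --- the characterization of $q$, the quantity $g$, the three regimes with thresholds $\tfrac32N-2$, $\tfrac32N-3$, $\tfrac32N-1$, and the exact correction constants --- is deferred to a ``delicate discrete optimization'' that you do not carry out, and the route you propose for it (identifying the Frobenius-minimal eigenvalue multiset of a rank-$(N-1)$ positive semidefinite integer matrix with constant diagonal) is not the tractable one. The proofs in~\cite{ryan} and~\cite{bulutoglu2004} do not optimize over eigenvalue multisets of $\HH^{\top}\HH$; they pass to the $N\times N$ matrix of row inner products $c_{kl}$ of $[\1\;\HH]$, for which $\sum_{i<j}s_{ij}^{2}$ equals $\sum_{k<l}c_{kl}^{2}$ plus an explicit constant, each off-diagonal row sum is fixed at $N-m-1$ by column balance, and $c_{kl}\equiv m+1\pmod 2$. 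Minimizing a sum of squares of integers lying in a fixed congruence class subject to fixed row sums is an entrywise convexity argument: the optimum concentrates each $c_{kl}$ on the admissible values adjacent to $(N-m-1)/(N-1)$, and it is this concentration that produces $q$, $g$, the three regimes (governed by how the residue $m-q(N-1)$ must be distributed), and, after the finer congruence analysis, the $q$-parity split in part~(b). Without this reduction, or a fully worked substitute for it, your proposal establishes only bound~(\ref{bound}) together with the congruence-based rounding, not the stated inequality; exhibiting the conjectured extremal $\M$ (your ``Hadamard-type blocks'') would at best show the bound is not improvable, which, as you note, is not what is being asked.
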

In this paper, we search for $\es$-optimal SSDs achieving the lower bound in Theorem~\ref{newbounds}  with $s_{\max}\leq 6$. 
The following theorem shows that  $\es$-optimality is a
sufficient condition for minimax optimality if 
$s_{\rm max} \leq 6$.
\begin{theorem} \label{minimax}
	(\cite{ryan}) Let $\D(N,m)$ be an $\es$-optimal SSD.
	\begin{description}
		\item[{\rm (a)}] If $N\equiv 0$ (mod 4) and $s_{\rm max}=4$, then $\D(N,m)$
		is minimax-optimal.  
		\item[{\rm (b)}] If $N\equiv 2$ (mod 4) and $s_{\rm max} \in \{2,6\}$, then
		$\D(N,m)$ is minimax-optimal.
	\end{description}
\end{theorem}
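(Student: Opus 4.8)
The plan is to reduce the statement to a short computation resting on one structural fact: in an SSD the columns are balanced $\pm1$ vectors, so the inner products $s_{ij}$ between distinct columns are constrained modulo $4$, and this constraint, combined with the hypothesis that $s_{\max}$ is small, forces $\sum_{i<j}s_{ij}^2$ to be an explicit affine function of $f_{s_{\max}}$. Since $\es=\sum_{i<j}s_{ij}^2/\binom{m}{2}$ and $\binom{m}{2}$ depends only on $m$, minimizing $\es$ among competing designs of the same size then amounts exactly to minimizing $s_{\max}$ first and $f_{s_{\max}}$ second, which is the minimax criterion.

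First I would record the congruence constraint. Fix two columns $\dd_i,\dd_j$ of an SSD and let $a$ be the number of rows in which both entries equal $+1$; since each column has exactly $N/2$ entries equal to $+1$, counting the four cell types shows the columns agree in $2a$ rows and disagree in $N-2a$ rows, so $s_{ij}=\dd_i^{\top}\dd_j=4a-N$. Hence $s_{ij}\equiv -N\equiv N\pmod 4$ for every pair. In particular, if $N\equiv 0\pmod 4$ then every $s_{ij}$ is a multiple of $4$, while if $N\equiv 2\pmod 4$ then every $s_{ij}\equiv 2\pmod 4$, so in that case $s_{ij}\neq 0$. Moreover $s_{ij}=\pm N$ is impossible, since no two columns of an SSD are equal or negatives of one another. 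These remarks pin down the possible values of $s_{ij}^2$ once $s_{\max}$ is known.

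Next I would dispatch the cases. For (a), $N\equiv 0\pmod 4$ and $s_{\max}(\D)=4$: if some SSD $\D'$ of the same size had $s_{\max}(\D')=0$ it would be an orthogonal array with $\es(\D')=0$, forcing $\es(\D)=0$ by optimality and contradicting $s_{\max}(\D)=4$, so $4$ is the least value of $s_{\max}$ over all SSDs of this size and $\D$ attains it; and for any SSD $\D'$ of this size with $s_{\max}=4$ we have $s_{ij}'\in\{-4,0,4\}$, hence $\sum_{i<j}(s_{ij}')^2=16\,f_{s_{\max}}(\D')$ and $\es(\D')=16\,f_{s_{\max}}(\D')/\binom{m}{2}$, so $\es$-optimality of $\D$ says $\D$ minimizes $f_{s_{\max}}$ among all such $\D'$, giving minimax-optimality. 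For (b) with $s_{\max}(\D)=2$: every $s_{ij}\in\{-2,2\}$, so $\es(\D)=4$ and $f_{s_{\max}}(\D)=\binom{m}{2}$; since $2$ is the least value $s_{\max}$ can take for an SSD with $N\equiv 2\pmod 4$ and every SSD attaining it has $f_{s_{\max}}=\binom{m}{2}$, $\D$ is minimax-optimal. For (b) with $s_{\max}(\D)=6$: every $s_{ij}\in\{-6,-2,2,6\}$, so with $f=f_{s_{\max}}(\D)$ counting the pairs with $|s_{ij}|=6$ one gets $\sum_{i<j}s_{ij}^2=36f+4(\binom{m}{2}-f)=4\binom{m}{2}+32f$ and $\es(\D)=4+32f/\binom{m}{2}$; since $f\ge 1$ here, any SSD $\D'$ of this size with $s_{\max}(\D')=2$ would satisfy $\es(\D')=4<\es(\D)$, contradicting optimality, so $6$ is the least attainable value of $s_{\max}$ and $\D$ attains it, and the displayed relation shows $\es$-optimality of $\D$ forces $f$ to be minimal, again giving minimax-optimality.

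The only real content is the congruence computation of the first step; everything afterwards is bookkeeping. The one place to be careful is the case analysis for (b): one must first rule out that a competitor lowers $s_{\max}$ to $2$ before comparing frequencies at $s_{\max}=6$, and one genuinely uses that $s_{ij}=0$ cannot occur when $N\equiv 2\pmod 4$ — otherwise the value set in the $s_{\max}=6$ case would include $0$ and the clean affine relation between $\es$ and $f_{s_{\max}}$ would fail. I do not expect any step to be hard.
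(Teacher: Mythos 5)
Your proof is correct. The paper gives no proof of this theorem---it is quoted directly from~\cite{ryan}---so there is no internal argument to compare against; your route (the congruence $s_{ij}=4a-N\equiv N \pmod 4$ for balanced columns, which pins down the value set of the $s_{ij}$ once $s_{\rm max}$ is bounded and makes $\es$ an increasing affine function of $f_{s_{\rm max}}$, after first ruling out a competitor with strictly smaller $s_{\rm max}$) is the standard argument and is exactly the reasoning underlying the cited result.
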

\section{The Equivalence Between  SSDs and RIBDs} \label{sec:equiv}
\label{SSDequivalenceRBIBD}
An incomplete block design (IBD)  with parameters $(v, b, r, h)$,
denoted by IBD$(v, b, r, h)$,
is a pair $(V, \cB)$ where  $V$ is a $v$-set of points and $\cB$ is a collection of
$b$ $h$-subsets ({\em  blocks}) of $V$, $h<v$. The   parameters must satisfy the condition
\begin{equation*}
	vr = bh, 
\end{equation*}
An IBD $(V,\cB)$ with parameters $(v, b, r, h)$  is called a {\em resolvable incomplete block design},
denoted by  RIBD, if  the collection
$\cB$ of blocks can be partitioned into $r$ subsets called {\em parallel classes}
of size $q = b/r$,  each of which partitions the point set.  

Henceforth, $N$ will denote a positive even integer greater 
than or equal to $8$.
Let $Q=( B_{1},B_{2})$ and  $Q^\prime =( B^\prime_{1},B^\prime_{2})$  
be two different parallel classes on $N$ points.
Define their {\em parallel class intersection matrix} (PCIM) as the
$2\times 2$ matrix $\A(Q,Q^\prime)$ with entries defined by
$a_{ij} = |B_{i}  \cap B^\prime_{j}|$, see~\cite{morales}.
An IBD$(N,2m,m,N/2)$ has $m$ parallel classes. 
Thus, for an arbitrary fixed  parallel class    $Q$ there are 
$m-1$  PCIMs  of the form $\A(Q,Q^\prime)$.
Since each point belongs to exactly one block of a 
parallel class, both the column and row sums in a PCIM are $N/2$. Then, by relabeling the 
blocks in parallel classes if necessary, each of the
$m-1$ PCIMs associated with $Q$ can be assumed to be one of  
\begin{equation}\label{PCIMs}
	{\mathbf \cT}_{i,t}=
	\left[
	\begin{array}{c c}
		\frac{N-i}{4}-t&\frac{N+i}{4} +t \\
		\frac{N+i}{4}+t& \frac{N-i}{4}-t \\
	\end{array}
	\right], \; t = 0,\ldots,  \frac{N-i}{4}, \;\mbox{for} \; N\equiv i \,\, \, \mbox{(mod 4)} \;  
\end{equation}
for $i\in\{0,2\}$.
For any $2 \times 2$ matrix $\A=[a_{ij}]$,  let 
$S(\A)=  |a_{11}-a_{12}-a_{21}+a_{22}|$, hence $S({\mathbf \cT}_{i,t}) = 4t + i$.
\begin{theorem} \label{SSDvsRIBD} 
	An  RIBD with
	parameters $(N, 2m,m,  N/2)$ 
	such that for any two distinct parallel classes $Q$ and $Q'$,
	$S(\A(Q,Q^\prime)) \leq 4t+i$ 
	exists
	if and only if a $N$ row, $m$ column, $\{-1,1\}$-array $\HH(N,m)$ with each column orthogonal to the all $1$s column and
	$s_{\rm max} = 4t +i$, where $i\in\{0,2\}$ exists. 
\end{theorem}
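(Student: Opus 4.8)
I would prove this by exhibiting an explicit dictionary between the $m$ balanced columns of $\HH(N,m)$ and the $m$ parallel classes of the RIBD, both living on the common point set $V=\{1,\dots,N\}$ of row indices, and then showing that the Gram entries $s_{ij}$ of $\HH$ are exactly the values $S(\cdot)$ of the corresponding parallel class intersection matrices.

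First set up the dictionary. A $\{-1,1\}$ column $\cc$ of length $N$ orthogonal to $\1$ splits $V$ into $B^{+}=\{k:c_k=1\}$ and $B^{-}=\{k:c_k=-1\}$, each of size $N/2$, and hence determines the parallel class $(B^{+},B^{-})$; conversely a parallel class gives such a column once one of its two blocks is designated as the $+1$ block, the other choice producing $-\cc$. Running this over all $m$ columns yields a collection $\cB$ of $2m$ blocks of size $N/2$ partitioned into $m$ parallel classes, so every point has replication number $m$; since $N\cdot m=2m\cdot(N/2)$ the identity $vr=bh$ holds and $(V,\cB)$ is an $\mathrm{RIBD}(N,2m,m,N/2)$. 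The construction is visibly reversible, so everything reduces to matching the two ``intersection'' conditions.

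Now the arithmetic heart. Given columns $\cc,\cc'$ with associated parallel classes $Q=(B_1,B_2)$ and $Q'=(B'_1,B'_2)$, partition the rows into the four cells $B_k\cap B'_l$: rows in $B_1\cap B'_1$ or $B_2\cap B'_2$ contribute $+1$ to $\cc^{\top}\cc'$ and rows in $B_1\cap B'_2$ or $B_2\cap B'_1$ contribute $-1$, so $\cc^{\top}\cc'=a_{11}-a_{12}-a_{21}+a_{22}$ with $a_{kl}=|B_k\cap B'_l|$ the entries of $\A(Q,Q')$; hence $|\cc^{\top}\cc'|=S(\A(Q,Q'))$, independent of how the blocks of $Q$ and $Q'$ are labeled. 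Because $N$ is even and all row and column sums of $\A(Q,Q')$ equal $N/2$, one gets $a_{11}=a_{22}$ and $a_{12}=a_{21}$, so after the relabeling of~(\ref{PCIMs}) the PCIM is some $\cT_{i,t}$, we have $\cc^{\top}\cc'=4a_{11}-N\equiv i\pmod 4$ when $N\equiv i\pmod 4$, and $|\cc^{\top}\cc'|=S(\cT_{i,t})=4t+i$. Finally, the largest intersection of two blocks coming from a pair with $S(\A(Q,Q'))=4t+i$ equals $\tfrac{N+i}{4}+t=\tfrac{N+(4t+i)}{4}$ (intersections inside a parallel class being $0$), so ``any two distinct blocks meet in at most $(N+4t+i)/4$ points'' is the same as ``$S(\A(Q,Q'))\le 4t+i$ for all distinct parallel classes,'' which under the dictionary is ``$|\cc^{\top}\cc'|\le 4t+i$ for all pairs of columns,'' i.e. $s_{\rm max}\le 4t+i$; combined with the parity $s_{ij}\equiv i\pmod 4$ this pins $s_{\rm max}$ to a value of the form $4t'+i$, which is the content of the stated equivalence.

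I expect the main obstacle to be exactly that middle/last step: establishing $|\cc^{\top}\cc'|=S(\A(Q,Q'))$ and, crucially, extracting from the $N/2$ margins both the normal form $\cT_{i,t}$ of the PCIM and the congruence $\cc^{\top}\cc'\equiv i\pmod 4$, since this parity is what lets the one-sided block-intersection bound on the RIBD side be read back as the statement about $s_{\rm max}$ (and it is also where one must watch the extremal case $4t+i=N$, i.e. coincident parallel classes versus equal or opposite columns, which does not arise in the applications where $4t+i\le 6<N$). The remaining ingredients --- the column/parallel-class correspondence, the parameter check $vr=bh$, resolvability by construction, and invariance under relabeling blocks within a class --- are routine.
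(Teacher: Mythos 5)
Your proposal is correct and follows essentially the same route as the paper's proof: the explicit dictionary between balanced $\{-1,1\}$ columns and parallel classes of half-sized blocks, together with the identity $|\cc^{\top}\cc'| = S(\A(Q,Q'))$ obtained by splitting the rows into the four cells $B_k \cap B'_l$. You in fact supply details the paper leaves implicit, namely the $vr=bh$ check, the derivation of the normal form $\cT_{i,t}$ from the $N/2$ margins, and the parity observation $s_{\ell j} \equiv i \pmod 4$ that reconciles the inequality $S(\A(Q,Q')) \le 4t+i$ with the stated equality $s_{\rm max} = 4t+i$.
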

\begin{proof}
	Suppose that 
	$(B_{1,1}, B_{1,2}),\ldots, (B_{m,1}, B_{m,2})$ 
	are the parallel classes of the RIBD$(N, 2m,$ $m,N/2)$.
	Then for each parallel class  $(B_{\ell,1}, B_{\ell,2})$ ($1\leq \ell \leq m$), we define
	the column vector  $\hh_{\ell}$  as follows: 
	\begin{equation} \label{PC_Column}
		h_{\ell,p}= \begin{cases} \phantom{-}1& {\rm if} \;  p \in B_{\ell,1},  \cr                      
			-1&  {\rm if} \;   p \in B_{\ell,2}, 
		\end{cases}
	\end{equation}
	for each $1\leq p \leq N$, where 
	$h_{\ell,p}$ is the $p$th entry of  $\hh_{\ell}$. 
	Note that the frequencies of $+1$ and $-1$ are the same in each column constructed 
	from the RIBD.
	Hence, these $m$  columns form a $\{-1,1\}$-array $\HH(N,m)$ with each column orthogonal to the all $1$s column.
	For any two columns  $\hh_\ell$ and $\hh_j$ of $\HH(N,m)$ defined by  the parallel classes $Q_\ell$ and $Q_j$, we have
	$|s_{\ell,j}| =  S(\A(Q_\ell,Q_j)) \le 4t +i$ by~(\ref{PCIMs}) and~(\ref{PC_Column}). This implies that $s_{\rm max} = 4t +i$, where $i\in\{0,2\}$. 
	
	Conversely, suppose that  $\HH(N,m)$ is a $\{-1,1\}$-array with each column orthogonal to the all $1$s column.
	For each column $\hh_{\ell}$ ($1\leq \ell \leq m$)
	of the array $\HH(N,m)$, there exist two blocks $B_{\ell,1}$ and $B_{\ell,2}$ that 
	partition $V= \{1,2,\ldots, N \}$, such that if the $p$th entry of $\hh_{\ell}$ is $1$,  then
	$p$  is contained in the block $B_{\ell,2}$, otherwise $p$ is contained in $B_{\ell,2}$.   
	Clearly, these two blocks form a parallel class. Since the frequencies of $+1$ and $-1$ 
	in each column are both $N/2$, each  block has size $N/2$. Now,  $s_{\rm max} = 4t +i$, where $i\in\{0,2\}$ 
	implies that $S(\A(Q_\ell,Q_j)) \le 4t +i$ for any two   parallel classes $Q_\ell$ and $Q_j$ defined by the $\ell$th and $j$th columns. 
\end{proof}
Next, we provide an example for Theorem~\ref{SSDequivalenceRBIBD}, where 
$(B_{1,1},B_{1,2}),\ldots,(B_{8,1},B_{8,2})$ is an RIBD$(6,16,8,3)$  corresponding to a $6$ row, $8$ column, 
$\{-1,1\}$-array  with each column orthogonal to the all $1$s column and
$\smax = 6$.
\begin{example}
	$$\HH(6,8)=\left[\begin{array}{rrrrrrrr}
		-1& 1&-1&-1&-1& 1& 1& 1 \\ 
		1&-1&-1& 1& 1& 1&-1& 1 \\
		-1&-1& 1&-1&-1&-1&-1& 1\\ 
		-1& 1&-1& 1& 1& 1&-1&-1\\ 
		1& 1& 1& 1&-1&-1& 1&-1\\ 
		1& -1& 1&-1& 1&-1& 1&-1
	\end{array}\right]
	$$
	$B_{1,1}=\{2,5,6\}$, $B_{1,2}=\{1,3,4\}$,  $B_{2,1}=\{1,4,5\}$, 
	$B_{2,2}=\{2,3,6\}$,  $B_{3,1}=\{3,5,6\}$, $B_{3,2}=\{1,2,4\}$,
	$B_{4,1}=\{2,4,5\}$, $B_{4,2}=\{1,3,6\}$,  $B_{5,1}=\{2,4,6\}$, 
	$B_{5,2}=\{1,3,5\}$,  $B_{6,1}=\{1,2,4\}$, $B_{6,2}=\{3,5,6\}$, 
	$B_{7,1}=\{1,5,6\}$, $B_{7,2}=\{2,3,4\}$,  $B_{8,1}=\{1,2,3\}$, $B_{8,2}=\{4,5,6\}$.
\end{example}
\section{The Optimization Problem} \label{sec:opt}
\label{optimization}
In this section, using the equivalence given in Theorem~\ref{SSDvsRIBD}, we formulate the problem of constructing an
$\es$-optimal and  minimax-optimal SSD with  $N$ (even) rows, $m$ columns and $s_{\rm max} = 4t +i$ for $(t,i)\in  \{(0,2),(1,0),(1, 2)\}$ (i.e., $s_{\max} \in \{2,4,6\}$) as a discrete optimization problem.
The following theorem is used  to define the objective function for this optimization problem.

\begin{theorem} \label{propertiesRIBDs} 
	Let $\D$ be an SSD with $N$ rows and  $m$ columns, and 
	$(B_{1,1},B_{1,2}), \ldots,(B_{m,1},B_{m,2})$ be the parallel classes of the RIBD  defined by the columns of $\D$ according to Theorem~\ref{SSDvsRIBD}. Let $1\le h,p \le2$, and $\ell \not= j$, then the following hold.
	\begin{description}
		\item[{\rm (a)}] $|s_{\ell,j}| = |4|B_{\ell,p} \cap B_{j,h}| - N|$.
		\item[{\rm (b)}] $\es = \frac{1}{{m \choose 2}}\sum_{\ell<j} (4|B_{\ell,h} \cap B_{j,p}| - N)^2$.
		\item[{\rm (c)}]  $s_{\rm max} = 4t +i \Longleftrightarrow
		\frac{N-i}{4}-t \le |B_{\ell,h} \cap B_{j,p}|  \le \frac{N+i}{4}+t$,  where $i\in\{0,2\}$.
	\end{description}
\end{theorem}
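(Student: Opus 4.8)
The plan is to unwind the correspondence of Theorem~\ref{SSDvsRIBD} and then collapse everything attached to a pair of columns down to a single scalar. Fix columns $\ddd_\ell$ and $\ddd_j$ of $\D$ with $\ell\neq j$. By the construction in the proof of Theorem~\ref{SSDvsRIBD}, $p\in B_{\ell,1}$ exactly when the $p$th entry of $\ddd_\ell$ equals $+1$ and $p\in B_{\ell,2}$ exactly when it equals $-1$, and similarly for $\ddd_j$. Writing $s_{\ell,j}=\ddd_\ell^{\top}\ddd_j=\sum_{p=1}^N h_{\ell,p}h_{j,p}$ and splitting this sum according to the four sign patterns of $(h_{\ell,p},h_{j,p})$, I would obtain
$$s_{\ell,j}=|B_{\ell,1}\cap B_{j,1}|+|B_{\ell,2}\cap B_{j,2}|-|B_{\ell,1}\cap B_{j,2}|-|B_{\ell,2}\cap B_{j,1}|,$$
which is precisely $\pm S(\A(Q_\ell,Q_j))$ for the PCIM $\A(Q_\ell,Q_j)=[a_{hp}]$ with $a_{hp}=|B_{\ell,h}\cap B_{j,p}|$.

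The key reduction is that, $\D$ being an SSD, each block $B_{\ell,h}$ has exactly $N/2$ points, so every row sum and column sum of $\A(Q_\ell,Q_j)$ equals $N/2$. Solving these linear relations forces $a_{11}=a_{22}$ and $a_{12}=a_{21}=N/2-a_{11}$; hence for every $(h,p)$ the quantity $4a_{hp}-N$ equals either $4a_{11}-N$ or its negative, while $s_{\ell,j}=a_{11}+a_{22}-a_{12}-a_{21}=4a_{11}-N$. Taking absolute values gives $|s_{\ell,j}|=|4|B_{\ell,p}\cap B_{j,h}|-N|$ for every choice of $h,p$, which is~(a). Part~(b) is then immediate: square~(a) (the absolute value disappears), sum over $\ell<j$, divide by $\binom{m}{2}$, and compare with the definition of $\es$.

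For~(c) I would combine~(a) with $s_{\rm max}=\max_{\ell<j}|s_{\ell,j}|$. The elementary equivalence $|4x-N|\le 4t+i\iff \frac{N-i}{4}-t\le x\le \frac{N+i}{4}+t$ (clear the absolute value, then divide by $4$) shows that the asserted two-sided bound on all the intersection numbers $|B_{\ell,h}\cap B_{j,p}|$ is the same as requiring $|s_{\ell,j}|\le 4t+i$ for every pair, i.e.\ $s_{\rm max}\le 4t+i$; since~(\ref{PCIMs}) exhibits $4t+i$ as an attainable value of $S(\A(Q_\ell,Q_j))$, this upgrades to the stated equivalence with $s_{\rm max}=4t+i$.

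I do not anticipate any genuine obstacle: the argument is essentially bookkeeping built on Theorem~\ref{SSDvsRIBD}. The two points that deserve a little care are keeping the sign conventions straight when turning the $\pm1$ inner product into block intersections, and making explicit that it is the equal-block-size property (equivalently, resolvability together with the equal-frequency condition on the SSD) that collapses the four intersection numbers $a_{hp}$ to the single parameter $a_{11}$ — which is exactly why the choice of $h$ and $p$ in~(a)--(c) is immaterial.
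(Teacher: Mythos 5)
Your argument is correct and follows essentially the same route as the paper: both reduce everything to the parallel class intersection matrix, use the fact that its row and column sums equal $N/2$ (which is exactly what underlies the classification in~(\ref{PCIMs}) that the paper cites) to show every entry $a_{hp}$ satisfies $|4a_{hp}-N|=S(\A(Q_\ell,Q_j))=|s_{\ell,j}|$, and then derive (b) and (c) from (a). You merely spell out the bookkeeping the paper compresses into a reference to~(\ref{PCIMs}) and the proof of Theorem~\ref{SSDvsRIBD}, and you are in fact slightly more careful than the paper about the $s_{\rm max}=4t+i$ versus $s_{\rm max}\le 4t+i$ distinction in part~(c).
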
 
\begin{proof}
	By the proof of Theorem~\ref{SSDvsRIBD} and~(\ref{PCIMs}), we have  $|B_{\ell,p} \cap B_{j,h}|  =\frac{N-i}{4}-t$
	(or $\frac{N+i}{4} +t$) for some $1\le t \le \frac{N-i}{4}$. Then $|4|B_{\ell,p} \cap B_{j,h}| - N| = |\pm(i+4t)| = 4t+i=|s_{\ell,j}|$.
	Statements (b) and (c) follow from (a). 
\end{proof}
By Theorem~\ref{SSDvsRIBD},  a  feasible solution to our optimization problem is a RIBD with parameters $(N,2m,m,N/2)$.
However, since each parallel class of the RIBD
is uniquely determined by one of its blocks, a feasible solution reduces  to a set $\cB$  of $m$ blocks, $B_1, \ldots, B_m$ each of size $N/2$.  
Then based on Theorem~\ref{propertiesRIBDs}, we define the objective function as
\begin{equation*} \label{ObF0}
	g(\cB) = \sum_{\ell<j} \frac{(4|B_{\ell} \cap B_{j}| - N)^2}{{m \choose 2} }.
\end{equation*} 
Computing the intersections of blocks is the  bottleneck in computing $g(\cB)$. 
We used the bit array data structure to store each set (block) 
as in~\cite{segunda}. This allowed speeding up the calculation of intersections of blocks by using  bit-parallel Boolean instructions.
To determine the number of points of intersection of two blocks, we used  the SSE4.2 SIMD instruction,
$\_$mm$\_$popcnt$\_$u64, 
included in the recent general-purpose processors.  It counts the number of bits set to $1$ in a word of  $64$ bits.
Thus, in the {\tt C} language, $|B_{\ell} \cap B_{j}|$ is calculated by  
$\_$mm$\_$popcnt$\_$u64($BB_\ell \& BB_j$), where $BB_h$ is the 
binary representation of the block $B_h$ with $N \leq 64$.
These instructions  increase the speed by a factor of $\min (N,64)=N$.

To construct $\es$-optimal and minimax-optimal SSDs  based on 
Theorem~\ref{minimax}, it is necessary
to require that    $s_{\rm max} = 4t +i$ for $(t,i)\in  \{(0,2),(1,0),(1,2)\}$ (i.e., $s_{\max} \in \{2,4,6\}$). For this purpose, we define 
\begin{equation} \label{weigth}
	w(\ell,j)= \begin{cases} 1& {\rm if} \; \frac{N-i}{4}-1 \le |B_{\ell} \cap B_{j}|  \le \frac{N+i}{4}+1,  \cr                      
		b(N,m)&     \mbox{otherwise}, 
	\end{cases}
\end{equation}  for $1\le \ell<j \le m$, 
where $b(N,m)$ is the lower bound given in  Theorem~\ref{newbounds}. Then, we modify the objective function to:
\begin{equation} \label{ObF}
	f(\cB) = \sum_{\ell<j} w(\ell,j)\frac{(4|B_{\ell} \cap B_{j}| - N)^2}
	{ {m \choose 2} }. 
\end{equation} 
It follows from Theorem~\ref{propertiesRIBDs}~(c) and~(\ref{weigth}) that if the objective function~(\ref{ObF})  reaches the value $b(N,m)$,
then we have  $s_{\rm max} = 4t +i$ for $(t,i)\in  \{(0,2),(1,0),(1,2)\}$.  Hence,
Theorems~\ref{minimax},~\ref{SSDvsRIBD},  and~\ref{propertiesRIBDs}~(a)-(b) imply that an
$\es$-optimal and  minimax-optimal SSD with  $N$ (even) rows, $m$ columns and $s_{\rm max} = 4t +i$ for $(t,i)\in  \{(0,2),(1,0),(1,2)\}$
is found whenever the  objective function~(\ref{ObF})  achieves $b(N,m)$ for a RIBD.
\section{Tabu Search for SSDs} \label{sec:alg}
\label{TS}
The TS algorithm introduced 
by~\cite{gloverTS}  is an iterative metaheuristic technique used 
to search for  a solution that minimizes an objective function $f$  
over a set of feasible solutions $X$. TS has been used successfully to  construct $D$-optimal designs, constant weight codes, 
$1$-rotational resolvable balanced incomplete block designs, 
covering designs and  $\es$-optimal 
and minimax-optimal $k$-circulant SSDs, 
see~\cite{dai2009,gupta-morales,joo-bong,morales2001, nurmela1997}. 

TS is based on a neighborhood search (NS). 
In NS, each feasible solution $x$ has an 
associated set of neighbors, $N(x)\subset X$, called the {\it 
	neighborhood} of $x$.
It starts with a given initial feasible solution and searches 
the set $X$ by moving from one solution to another  in 
its neighborhood.
At each iteration, 
a move from the current solution $x$ to a best one  $x^\prime$ in $N(x)$ regardless of whether 
$f(x^\prime) \leq f(x)$ is made.  If more than one solution has the same minimum value, the tie is broken randomly.
However, the main shortcoming of NS is cycling through a set of solutions, i.e., keeping on revisiting the same set of solutions.
To prevent cycling, TS     maintains a list called the {\it tabu list}
$T$ of {\it length} $|T|=M$. 
Each move in $T$ is removed  after $M$ iterations.

Sometimes, the tabu list may forbid certain desirable
moves, such as those that lead to a better solution than
the best one found so far. An {\it aspiration criterion} $aspF$ is 
introduced to cancel the tabu status of a move when this move is 
judged useful.

TS stops when the objective function  reaches the lower bound $b(N,m)$.
However, there is no guarantee of
reaching the lower bound, and the search process is stopped if the number of
iterations used without improving the best solution exceeds a preset $nitmax$ limit. 
\begin{figure}
		\begin{center}
	\begin{tabbing}
		00\=000\=000\=000\=000\=000\=000\=\hskip 9.5em\=00=\kill
		1\> Input $N$, $m$, $nitmax$, $b(N, m)$, $M$. \\
		2\> {\bf Generate} an initial RIBD$(N, 2m, m, N/2)$ (solution) $\cB_0$ randomly;  \\
		3\> {\bf Set} $\cB_{best} := \cB_0$, $T :=\emptyset$, $r := rbest := 1$, $fbest :=aspF:= f(\cB_{best})$;  \\
		4\> {\bf while} ($r - rbest \leq nitmax$ \& $fbest > b(N, m)$) {\bf do} \\
		5\>\> {\bf Set} $min = \infty$;  \\
		6\>\> {\bf for} $\cB' \in N(\cB_0)$ {\bf do} \\
		7\>\>\> {\bf Set} $s:=$ move from $\cB_0$  to $\cB'$; \\
		8\>\>\> {\bf if} ($f(\cB') \leq min$ \&  ($s \notin T$ or $f(\cB') < aspF$) ) {\bf then} \\
		9\>\>\>\> {\bf if} ($f(\cB') == min$) {\bf then} \\
		10\>\>\>\>\> {\bf Set} $\cB'' := \cB'$ with 50\% probability;  \\
		11\>\>\>\> {\bf else if} \\
		12\>\>\>\>\> {\bf Set} $\cB'' := \cB'$, $min =f(\cB')$ ;  \\
		13\>\>\>\> {\bf end if} \\
		14\>\>\> {\bf end if} \\
		15\>\> {\bf end  for} \\
		16\>\> {\bf if} ($min < fbest$) {\bf then}  \\
		17\>\>\> {\bf Update} $\cB_{best} := \cB''$; \\
		18\>\>\> {\bf Set} $fbest := aspF := min$,\, $rbest :=r$;  \\
		19\>\> {\bf end if} \\
		20\>\> {\bf Update} $T := T \cup \{ \mbox{move from } \cB'' \mbox{ to } \cB_0 \}$; \\
		21\>\>{\bf Update} $\cB_0$ := $\cB''$;  \\
		22\>\> {\bf if} ($|T| > M$) {\bf then} \\
		23\>\>\> remove oldest move from $T$ \\
		24\>\>{\bf end if} \\
		25\>\> {\bf Set} $r := r + 1$;  \\
		26\>\ {\bf end while} \\
		27\>\ {\bf Output} $\cB_{best}$, $fbest$. \\
	\end{tabbing}
\caption{The TS algorithm} \label{codeTS}
\end{center}
\end{figure}
Two IBDs with parameters $(N,2m,m,N/2)$
are defined as {\em neighbors} if they are identical for every parallel class but one, and in that parallel class there are exactly two points that switch blocks. A swap move is entirely  determined by the vector $(\ell,u,w)$, where points $u$ and $w$ are switched in the parallel class $\ell$.
The definitions of the neighborhood and the objective function  in
 Section~\ref{sec:opt}  allow  calculating the change in the
objective function~(\ref{ObF}) value without recomputing the objective function~(\ref{ObF}). 
The only blocks that change 
after the move ($\ell,u,w)$ are $B_{\ell,1}$ and $B_{\ell,2}$.

Whenever  the 
points $u$ and $w$  switch blocks in the parallel class $\ell$, the tabu list forbids the exchange of the points $u$ and $w$ at the parallel class $\ell$ in the subsequent  $M$  iterations.  
Formally, the tabu list consists of vectors $(\ell,u,w)$, where the points $u$ and $w$ were  forbidden to be  exchanged 
during the preceding  $M$  iterations,  in the parallel class $\ell$. The tabu list length $M$ was adjusted experimentally. 
For the problem instances in this paper, the best $M$ 
seems to be some integer
between $6$ and $8$. The pseudocode of our TS algorithm is 
presented in Algorithm~\ref{codeTS}.

In the above described algorithm, the computation time is mainly spent on iterations. Hence, we next provide the complexity analysis of each iteration.
Let $\mathcal{B} = \{B_1 , \ldots , B_{\ell} , B_{\ell+1},$ $\ldots,  B_m \}$ be an RIBD (a feasible solution to our optimization problem). 
In Algorithm~\ref{codeTS}, for each block $B_{\ell}$ that changes to
 $B_{\ell}'$ 
 there are $m-1$ blocks,
 $\{B_1 , \ldots , B_{\ell-1},$ $ B_{\ell+1},\ldots, B_m \}$ that do not change. Let $\mathcal{B}' = \mathcal{B} - 
 \{B_{\ell} \} \cup \{B'_{\ell}
 \}$. Then the
 objective function~(\ref{ObF}) is updated according to
$$f(\mathcal{B}' ) = f(\mathcal{B}) + \frac{1}{{m \choose 2}}\sum_{j=1, j \neq \ell}^m 	w(\ell,j) \left[(4|B'_{\ell}\cap B_{j}|-N)^2- (4|B_{\ell}\cap B_{j}|-N)^2\right].$$
Since the intersection of two blocks is performed in
$\ceil*{N/64}$ bitwise operations, the complexity to update the objective function after a move is $\mathcal{O}(mN)$.
The only  blocks that change after the move $(\ell, u,w)$  are 
$B_{\ell,1}$ and $B_{\ell,2}$ ($1\leq \ell \leq m$) and there are $ (N/2)^2 $ possible changes to $(B_{\ell,1},B_{\ell,2})$. Hence, the size of the neighborhood of any
RIBD is  $ m(N/2)^2$. Then, the
overall time spent for each iteration of this algorithm is 
$$
m\times\frac{N^2}{4}\times(m-1) \times   \ceil*{\frac{N}{64}}.
$$

Let $I(m,N)$ denote the expected number of  iterations of the algorithm for the $m$ column and $N$ row case.  Then, for $N\leq 64$, 
the expected time complexity for each run of this algorithm is $I(m,N)\times\mathcal{O}(m^2 N^2)$.
For the most difficult cases the algorithm was run for $4{,}000{,}000$  times. So, the overall expected running
time of the algorithm was  $4{,}000{,}000 \times I(m,N) \times \mathcal{O}(m^2 N^2)$. If we had not used bit-parallelism, then the expected time 
complexity for each run would have been $I(m,N)\times\mathcal{O}(m^2N^3)$. 

\section{Implementation Details} \label{sec:impl}  
The TS algorithm described above was programmed in  {\tt C}
and all computations were carried out on a 2.67 GHz or 2.4 GHz processor.
The source code of the algorithm 
can be requested by sending an email to the first author.

The TS algorithm was used  to construct  fifteen $\es$-optimal and minimax-optimal 
SSDs achieving the $\es$ lower bound of~\cite{ryan} with $s_{\max}\leq6$ 
for $N=16,18,20,22,24$  and $N \le m \le 30$.   
The existence 
question for each of these SSDs was previously unknown. 

Initial computational experiments show that the strategy of running the algorithm a larger number of times 
with a smaller $nitmax$
  is better than running the algorithm a smaller number of times with a larger  $nitmax$. 
For example, for $N = 18$ and $m=23$, 
$8{,}000$ runs of the algorithm with $nitmax=200$ found $3$ optimum solutions, whereas only $1$ optimum solution 
was found by $2{,}000$ runs with $nitmax=800$. 
Then, the TS procedure was carried out at most $80{,}000$ times   
using $nitmax = 300$ on each instance tested.
With these values, the TS algorithm did not produce any optimum solutions for  $(N,m)= (16,27), (18,24), (18,29), (24,25)$. However, the best SSDs found for these cases had $s_{\rm max} = 4t+i$ for $(t,i)\in  \{(1,0),(1,2)\}$, and an $\es$ value equal to   
$b(N,m) + 32/(m(m-1))$ and  $b(N,m)+64/(m(m-1))$,  
where $f_{s_{\rm max}}$  is larger by one  than the  $f_{s_{\rm max}}$ of a hypothetical $\es$-optimal and minimax-optimal SSD achieving the $\es$ lower bound of~\cite{ryan} with $s_{\rm max}\leq 6$.
Then, for these cases, the TS algorithm was carried out at most $4{,}000{,}000$ times with  $nitmax = 450$.
Since the TS algorithm runs are completely independent and no information is exchanged, an  independent-thread parallelization strategy was  used.  
Different random number seed values were used
to avoid an overlapping search. 

The bit-parallel  TS was able to construct fifteen previously 
unknown $\es$-optimal and  minimax-optimal SSDs.
Table~1 lists all cases where the best obtained  $\es$-optimal SSD is minimax-optimal.
Each row of this table corresponds to an SSD. The first column shows the number of rows and columns for the SSDs. Column nRUNs  gives the number of runs it took the TS algorithm to find   an    $\es$-optimal  and minimax-optimal SSD. 
The last column gives the  CPU time.
The $N=16$ row, $m=27$ column case was solved in $585.4$ CPU hours.
However,
with the independent-thread parallelization approach, this  search took only $16.1$ hours, 
using a cluster with 40 threads.   
\begin{table}[ht]\label{table1}
		\begin{center}
	\begin{tabular}{r r@{\hskip 1.3in} r@{\hskip 1.3in} r }
		\hline
		$N$ & $m$ & nRUNs & CPU time \\
		\hline
		16 & 25  &      95,989             & 211.5 minutes \\
		& 26  &    190,049            & 1.1 days \\ 
		& 27  & 3,954,798                     & 24.4 days \\
		18  & 23  &         1,232                & 3.5 minutes   \\
		& 24  &     880,559                    & 2.7 days \\
		& 25  &      40,058               & 241.6 minutes \\
		& 26  &        12,516               & 74.2 minutes  \\
		& 27  &        26,689               & 174  minutes\\
		& 28  &        14,243              & 96.9 minutes \\
		& 29  &       392,285          &  3.1 days \\ 
		20  & 21  &             19                 & 0.1   minutes \\
		22  & 22  &               6                  & 0.05 minutes \\
		& 23  &                7                  & 1.0  minutes \\
		24  & 24  &           509                & 4.0 minutes  \\
		& 25  &      112,976              &  1.6 days  \\ 
		\hline
	\end{tabular}
	\caption{$\es$-optimal and minimax-optimal SSDs obtained by the TS algorithm}
	\end{center}
\end{table}

In Table 1, we do not observe that	the CPU time always increases with the number of columns or rows. 
There are also big increases on	the	CPU	times,	just by	the	 addition of one column.
There are two reasons for these observations.
Firstly, the geometry of the problem can change as the number of columns
of the sought after SSD increases. 
In particular, let $\rho(N,m)$ and $\rho'(N,m)$
be the ratio of the number of all $N$ row, $m$ column, 
$\es$-optimal and minimax-optimal SSDs to the number 
of all and all locally optimum $N$ row, $m$ column SSDs.
It is possible that for fixed $N$, $\rho(N,m)$ and/or $\rho'(N,m)$ is not a non-increasing function of $m$. This is mainly because not every $\es$-optimal and minimax-optimal, $m+1$ column, $N$ row SSD can be obtained by adding a column to an $\es$-optimal and minimax-optimal, $m$ column, $N$ row SSD.
Then the probability of finding an $\es$-optimal and minimax-optimal SSD in one iteration of the TS algorithm may 
actually increase going from $m$ columns to $m+1$ columns.
Secondly, TS is not a deterministic algorithm and the
CPU times are random with potentially large variances. Large variances may easily blur an increasing pattern. In fact, this is more of an issue 
for the previously unsolved difficult cases.

The only other algorithm that is competitive with the TS algorithm is the $NOA_p$ algorithm. 
For the $NOA_p$ algorithm, each new random  starting SSD  is independently picked from the previous random starting SSDs. So, each trial of the $NOA_p$ algorithm with a new random starting SSD can be thought as a Bernoulli trial with a success probability of $p$ of finding an SSD that achieves the best known $\es$  and minimax lower bounds. Then we can use 
\[Y = \begin{array}{l}\text{The number of trials before 
		finding an SSD that achieves}\\  
	\text{the  $\es$ and minimax lower bounds in~\cite{ryan}}
\end{array}\]
as a surrogate for the time it takes to find an SSD that achieves the best known $\es$ and minimax lower bounds. Now, $Y \sim \text{Geometric}(p)$ where 
$$\text{E}(Y)=\frac{1-p}{p},\quad \text{Var}(Y)=\frac{1-p}{p^2}.$$
For  the most difficult cases, $p$ is very small,  making $\text{Var}(Y)$ very large.
So, it is possible to get very lucky and find a solution very quickly or be very unlucky and not be able to find a solution after a very long time.
Hence, for the most difficult cases, we do not gain as much information by comparing CPU times as in the case of exact algorithms. 
When TS is used, we do not know   the distribution of the random variable $Y$  as  the independence of trials is no longer a valid assumption.
Finding an empirical distribution for $Y$ would require repeating the 
computational experiments in the paper a large number of times.
This is neither feasible due to resource and time constraints, nor worthed as all cases solved by the TS algorithm have no corresponding CPU times based on the $NOA_p$ algorithm (or any other algorithm).  No corresponding CPU times exist because  the $NOA_p$ algorithm failed to solve them despite being run for a very long time.
\section{Concluding Remarks}\label{sec:conclude}
In this paper, we developed a heuristic  algorithm for finding E($s^2$)-optimal and minimax-optimal SSDs that is more effective than the previously known most effective algorithm for the same purpose. Our algorithm  brings 
fifteen cases of E($s^2$)-optimal and minimax-optimal SSDs within computational reach by taking advantage of the equivalence between an SSD and an RIBD described in the proof of Theorem~\ref{SSDvsRIBD}  and bit-parallelism from~\cite{segunda}.
\section*{Acknowledgements}
The views expressed in this article are
those of the authors and do not reflect the official policy or
position of the United States Air Force, Department of Defense, or
the US Government. The authors 
thank the High Performance Computing Lab at
IIMAS-UNAM for providing computing resources. 
This paper is also published on Arxiv~\cite{arxiv2023}.
\bibliographystyle{plain}
\bibliography{BitparallelBib}

\begin{thebibliography}{10}

\bibitem{booth}
K.~H.~V. Booth and D.~R. Cox.
\newblock Some systematic supersaturated designs.
\newblock {\em Technometrics}, \textbf{4}:489--495, 1962.

\bibitem{bulutoglu2004}
D.~A. Bulutoglu and C.~S. Cheng.
\newblock Construction of $\es$-optimal supersaturated designs.
\newblock {\em Ann. Stat.}, \textbf{32}:1662--1678, 2004.

\bibitem{butler}
N.~A. Butler.
\newblock Minimax $16$-run supersaturated designs.
\newblock {\em Stat. Probab. Lett.}, \textbf{73}:139--145, 2005.

\bibitem{butler2001}
N.~A. Butler, R.~Mead, K.~M. Eskridge, and S.~G. Gilmour.
\newblock A general method of constructing $\es$-optimal supersaturated
  designs.
\newblock {\em J. R. Stat. Soc. Series B Stat. Methodol.},
  \textbf{63}:621--632, 2001.

\bibitem{dai2009}
C.~Dai, B.~Li, and M.~Toulouse.
\newblock A multilevel cooperative tabu search algorithm for the covering
  design problem.
\newblock {\em J. Combin. Math. Combin. Comput.}, \textbf{68}:33--65, 2009.

\bibitem{georgiou}
S.~Georgiou.
\newblock Supersaturated designs: A review of their construction and analysis.
\newblock {\em J. Stat. Plan. Infer.}, \textbf{144}:92--109, 2014.

\bibitem{gloverTS}
F.~Glover.
\newblock Tabu search {I}.
\newblock {\em ORSA J.~Comput.}, \textbf{3}:190--206, 1989.

\bibitem{gupta-morales}
S.~Gupta and L.B. Morales.
\newblock Constructing $\es$-optimal and minimax-optimal $k$-circulant
  supersaturated designs via multi-objective tabu search.
\newblock {\em J. Stat. Plan. Infer.}, \textbf{142}:1415--1420, 2012.

\bibitem{montgomery}
D.~R. Holcomb, D.~C. Montgomery, and W.~M. Carlyle.
\newblock The use of supersaturated experiments in turbine engine development.
\newblock {\em Quality Engineering}, \textbf{19}:17--27, 2007.

\bibitem{joo-bong}
S.~J. Joo and J.~Y. Bong.
\newblock Construction of exact ${D}$-optimal designs by tabu search.
\newblock {\em Comput. Stat. Data Anal.}, \textbf{21}:181--191, 1996.

\bibitem{koukouvinosSA}
C.~Koukouvinos, K.~Mylona, and D.~E. Simos.
\newblock $\es$-optimal and minimax-optimal cyclic supersaturated designs via
  multi-objective simulated annealing.
\newblock {\em J. Stat. Plan. Infer.}, \textbf{138}:1639--1646, 2008.

\bibitem{koukouvinosSA2}
C.~Koukouvinos, K.~Mylona, and D.~E. Simos.
\newblock An algorithmic construction of $\es$-optimal supersaturated designs.
\newblock {\em J. Stat. Theory Pract.}, \textbf{5}:357--367, 2011.

\bibitem{morales2001}
L.~B. Morales.
\newblock Two new $1$-rotational $(36,9,8)$ and $(40,10,9)$ {RBIBD}s.
\newblock {\em J. Combin. Math. Combin. Comput.}, \textbf{36}:119--126, 2001.

\bibitem{arxiv2023}
L.~B. Morales and D.~A. Bulutoglu.
\newblock A bit-parallel tabu search algorithm for finding {E}($s^2$)-optimal
  and minimax-optimal supersaturated designs.
\newblock {\em Arxiv}, 2023.
\newblock \url{https://arxiv.org/abs/2303.09104}.

\bibitem{morales19}
L.~B. Morales, D.~A. Bulutoglu, and K.~T Arasu.
\newblock The maximum number of columns in supersaturated designs with
  $\smax=2$.
\newblock {\em J. Comb. Des.}, \textbf{27}:448--472, 2019.

\bibitem{morales2014}
L.~B. Morales and G.~Vega.
\newblock On the enumeration of $\es$-optimal and minimax-optimal $k$-circulant
  supersaturated designs.
\newblock {\em J. Comb. Des.}, \textbf{22}:149--160, 2014.

\bibitem{morales}
L.~B. Morales and C.~Velarde.
\newblock Enumeration of resolvable $2-(10,5,16)$ and $3-(10,5,6)$ designs.
\newblock {\em J. Comb. Des.}, \textbf{13}:108--119, 2005.

\bibitem{ngyuen}
N-K. Nguyen.
\newblock An algorithmic approach to constructing supersaturated designs.
\newblock {\em Technometrics}, \textbf{38}:69--73, 1996.

\bibitem{ngyuencheng}
N-K. Nguyen and C.~S. Cheng.
\newblock New $\es$-optimal supersaturated designs constructed from incomplete
  block designs.
\newblock {\em Technometrics}, \textbf{50}:26--31, 2008.

\bibitem{nurmela1997}
K.~J. Nurmela, M.~K. Kaikkonen, and P.~R.~J. {\"Osterg\aa rd}.
\newblock New constant weight codes from linear permutation groups.
\newblock {\em IEEE Trans. Inform. Theory}, \textbf{43}:1623--1630, 1997.

\bibitem{lcgc}
R.~Rodil, E.~Mart\'{i}nez, A.~M. Carro, R.~A. Lorenzo, and R.~Cela.
\newblock Applying supersaturated experimental designs to the study of
  composite sampling for monitoring pesticide residues in water.
\newblock {\em LCGC North America}, \textbf{22}:272--286, 2004.

\bibitem{ryan}
K.~J. Ryan and D.~A. Bulutoglu.
\newblock $\es$-optimal supersaturated designs with good minimax properties.
\newblock {\em Journal of Statistical Planning and Inference},
  \textbf{137}:2250--2262, 2007.

\bibitem{segunda}
P.~S. Segundo, D.~R. Losada, and A.~Jim\'enez.
\newblock An exact bit-parallel algorithm for the maximum clique problem.
\newblock {\em Comput. \& Oper. Res.}, \textbf{38}:571--581, 2011.

\bibitem{tang}
B.~Tang and C.~F.~J. Wu.
\newblock A method for constructing supersaturated designs and $\es$
  optimality.
\newblock {\em Can. J. Stat.}, \textbf{25}:191--201, 1997.

\bibitem{Williams68}
K.~R. Williams.
\newblock Designed experiments.
\newblock {\em Rubber Age}, \textbf{100}:65--71, 1968.

\bibitem{wu}
C.~F.~J. Wu.
\newblock Construction of supersaturated designs through partially alliased
  interactions.
\newblock {\em Biometrika}, \textbf{80}:661--669, 1993.

\bibitem{WuBook}
C.~F.~J. Wu and M.~Hamada.
\newblock {\em Experiments Planning, Analysis, and Parameter Design
  Optimization}.
\newblock Wiley, New York, NY, USA, 2000.

\end{thebibliography}
\newpage
\begin{appendices}
\section{ }\label{appendix}
\end{appendices}
\def\cD{{\bf D}}
\centerline{E{\bf $(s^2)$-optimal and minimax-optimal SSDs obtained by the TS algorithm}}
\begin{table}[ht]
	\begin{center}

		\caption{$\cD^{\top}\cD$, $\smax=4$, $\freq=72$}
	\end{center}
\end{table}

\end{document}